\newtheorem{theorem}{Theorem}
\newtheorem{assumption}{Assumption}
\newtheorem{remark}{Remark}
\newenvironment{proof}{\noindent{\bf Proof}\,\,}{$\hfill\blacksquare$\\}
\begin{document}

\begin{frontmatter}

\title{Deep Reinforcement Learning for Wireless Sensor Scheduling in Cyber-Physical Systems\thanksref{footnoteinfo}} 
 
\thanks[footnoteinfo]{A. Ramaswamy was supported by the German Research Foundation (DFG) - 315248657. L. Shi was supported by a Hong Kong RGC General Research Fund
16204218.} 
 
\author[Paderborn]{Alex S. Leong}\ead{alex.leong@upb.de},   
\author[Paderborn]{Arunselvan Ramaswamy}\ead{arunr@mail.uni-paderborn.de},   
\author[Paderborn]{Daniel E. Quevedo}\ead{dquevedo@ieee.org},
\author[Paderborn]{Holger Karl}\ead{h.karl@upb.de},
\author[HKUST]{Ling Shi}\ead{eesling@ust.hk}

\address[Paderborn]{Faculty of Computer Science, Electrical Engineering and Mathematics, Paderborn University, Paderborn, Germany}  
\address[HKUST]{Department of Electrical and Computer Engineering, Hong Kong University of Science and Technology, Hong Kong}   

\maketitle

%

\maketitle

\begin{abstract}
In many  Cyber-Physical Systems, we encounter the problem of remote state estimation of geographically distributed and remote physical processes. This paper studies the scheduling of sensor transmissions  to estimate the states of multiple remote, dynamic processes. Information from the different sensors have to be transmitted to a central gateway over a wireless network for monitoring purposes, where  typically fewer wireless channels are available than there are processes to be monitored.  For effective estimation at the gateway, the sensors need to be scheduled appropriately, i.e., at each time instant one needs to decide which sensors have network access and which ones do not.  To address this scheduling problem, we formulate an associated Markov decision process (MDP). This MDP is then solved using a Deep Q-Network, a recent deep reinforcement learning  algorithm that is at once scalable and model-free. We compare our scheduling algorithm to popular scheduling algorithms such as round-robin and reduced-waiting-time, among others. Our algorithm is shown to significantly outperform these algorithms for many example scenarios.
\end{abstract}

\end{frontmatter}

\section{Introduction}
\label{intro_sec}
Cyber-physical systems (CPS) are systems built through integration of sensors, communication networks, controllers, dynamic (physical) processes and actuators.
They are playing an increasingly important role in
modern society, in areas such as energy, transportation, manufacturing, and
healthcare. The scale of typical CPS such as smart-grids, vehicular traffic networks and smart factories is large.  The realization of these systems faces substantial challenges arising
in diverse disciplines, ranging from communications and control to computing
\cite{special_issue_PROC12}. 
Supporting  estimation and control applications over wireless networks
has posed considerable challenges for the operation of networks and
the design of protocols \cite{special_issue_TAC15}. 

Figure~\ref{system_model} illustrates an example of a networked cyber-physical system for the purposes of remote state estimation. A number of processes are observed by sensors, with the sensors sending information 
via a shared wireless network (consisting of $M$ wireless channels) to a gateway, that computes state estimates of each of these processes. Such situations could, for instance,
occur if a central controller wishes to monitor a number of different
processes in an industrial plant.
 \begin{figure}[t!]
\centering 
\includegraphics[scale=0.4]{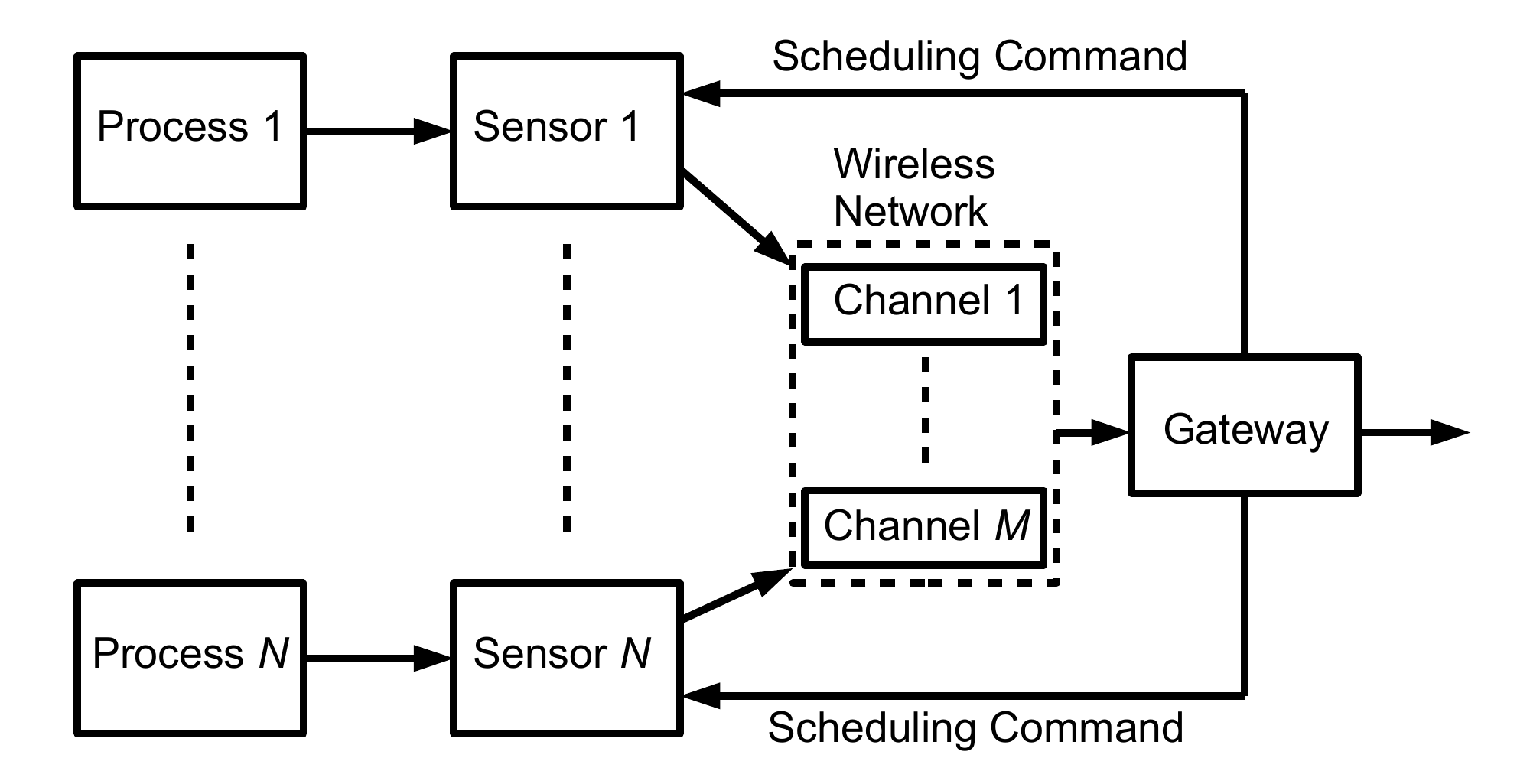} 
\caption{Remote state estimation with sensor scheduling}
\label{system_model}
\end{figure} 
From a networking perspective, one challenge lies in scheduling
transmissions from the sensors to the gateway,
because of both the volatile nature of wireless channels
and the need to carefully schedule transmissions over a shared medium
\cite{Molisch}.
While such channels provide the
opportunity for diversity, they also aggravate the dynamic scheduling problem:
which channel should be assigned to which sensor, and when?
    The problem of scheduling is further exacerbated by estimation and control requirements, which may be at odds with typical communications performance parameters such as waiting times, throughput, etc. \cite{ChaskarMadhow,WuSrikantPerkins}. 

The sensor scheduling problem wherein a single dynamic process is observed
by multiple sensors has been studied in
e.g. \cite{HovareshtiGuptaBaras,MoGaroneSinopoli,ZhaoZhangHu,LeongDeyQuevedo_TAC}. More
recently, sensor scheduling problems where multiple processes are
observed by different sensors has also been investigated
\cite{HanWuZhangShi,WuRenDeyShi}. In the case of single channel systems ($M=1$),  optimal sensor scheduling
problems without  packet drops have been previously studied in \cite{HanWuZhangShi}. For the case $M > 1$  and additionally with packet transmission length
constraints, some structural results were derived in \cite{WuRenDeyShi}, however
numerical results were only provided for the $M=1$ case. The focus of
the current paper is on the case $M >1$, where each wireless channel can also experience packet drops. In particular, we want to provide
computationally scalable methods for solving  optimal sensor
scheduling problems. 



For the dynamic scheduling problem,
 the gateway selects at each discrete time instant a subset (of size $M$) of the $N$ sensors which communicate the
sensor readings to the gateway, 
to update its estimates. 
We assume that the gateway has  knowledge of the process dynamics observed by each sensor, to allow Kalman filter-type estimation algorithms to be run.
The scheduling decision could be informed by knowledge about the
quality of the estimates as well as by conjectures about channel state
and probability of success of transmitting the readings to the
gateway. Knowledge of the channel states or channel statistics is not assumed to be known to the gateway (i.e. scheduling is done in a model-free manner), as such knowledge may be expensive to obtain (requiring e.g. the transmission of pilot signals), and furthermore since channel statistics are often also time-varying \cite{Eisen_ACC}.

As previously mentioned, the scale of a CPS is typically large. For our scheduling problem, this leads to an associated MDP with large state and action spaces. Traditional reinforcement learning based algorithms such as $Q$-learning  cannot be used to solve such MDPs due to Bellman's curse of dimensionality \cite{Bertsekas_DP1}. The curse of dimensionality can be overcome by the use of function approximations \cite{SuttonBarto}. Deep Q-Network (DQN) \cite{Mnih_Atari,Mnih_nature} is one such algorithm using deep neural networks as function approximators, that has shown tremendous promise in solving large MDPs in a scalable, model-free manner.  Deep reinforcement learning techniques have also been recently used to study difficult problems arising in control. The work  \cite{DemirelRamaswamy} studies a similar problem in controller scheduling, however it does not consider packet drops, and requires extra overhead in the transmission of  information from the sensors to the scheduler at every time step. The work of \cite{BaumannTrimpe} studies event-triggered control problems where the communication and control policies are learnt from scratch using an actor-critic approach.

The paper is organized as follows. The system model is presented in Section \ref{model_sec}. The sensor scheduling problem and associated MDP is described in Section \ref{optimal_scheduling_sec}, together with derivation of a stability condition and discussion of computational issues. The proposed deep reinforcement learning approach to the  scheduling problem is given in Section~\ref{deep_RL_sec}. Numerical studies can be found in Section \ref{numerical_sec}.

\section{System Model}
\label{model_sec}

\subsection{Sensing model}
\label{sec:sensing-model}

A diagram of the system model is shown in Fig. \ref{system_model}.
We consider $N$ independent, linear, discrete-time processes
\begin{equation}
\label{state_eqn}
x_{i,k+1} = A_i x_{i,k} + w_{i,k}, \quad i = 1,\dots,N
\end{equation}
where $x_{i,k} \in \mathbb{R}^{n_{x_i}}$ is the state of process $i$ at time $k$, and the process noise  $w_{i,k}$ is i.i.d. (in time) Gaussian with zero mean and covariance matrix $W_i \geq 0$.\footnote{For a symmetric matrix $X$, we say that $X> 0 $ if it is positive definite, and  $X \geq 0 $ if it is positive semi-definite.}
Each process is measured by a  sensor as
\begin{equation}
\label{measurement_eqn}
y_{i,k} = C_i x_{i,k} + v_{i,k}, \quad i=1,\dots,N
\end{equation}
where $y_{i,k} \in \mathbb{R}^{n_{y_i}}$ is the measurement  of process $i$ at time $k$, and the measurement noise $v_{i,k}$ is i.i.d. Gaussian with zero mean and covariance matrix $V_i > 0$. 
The noise processes $\{w_{i,k}\}$ and $\{v_{j,k}\}$ are assumed to be
mutually independent for all $i$ and $j$. 

We assume that each sensor has the computational capability to run a Kalman filter, i.e., each sensor $i$ can compute  local state estimates\footnote{In situations where channels experience packet drops, transmission of local state estimates  in general gives better estimation performance  than transmission of raw measurements \cite{XuHespanha}. It is worth noting that the situation where raw measurements are transmitted  can also be handled using the deep $Q$-learning technique considered in the present work.} and estimation error covariance matrices
\begin{equation*}
\begin{split}
\hat{x}_{i,k|k-1}^s & \triangleq \mathbb{E}[x_{i,k}|y_{i,0},\dots,y_{i,k-1}]  \\
\hat{x}_{i,k}^s & \triangleq \mathbb{E}[x_{i,k}|y_{i,0},\dots,y_{i,k}] \\
P_{i,k|k-1}^s & \triangleq  \mathbb{E}[(x_{i,k}-\hat{x}_{i,k|k-1}^s)(x_{i,k}-\hat{x}_{i,k|k-1}^s)^T  \\ & \quad\quad\quad |y_{i,0},\dots,y_{i,k-1}]\\
P_{i,k}^s & \triangleq  \mathbb{E}[(x_{i,k}-\hat{x}_{i,k}^s)(x_{i,k}-\hat{x}_{i,k}^s)^T|y_{i,0},\dots,y_{i,k}],
\end{split}
\end{equation*} 
using the Kalman filter equations \cite{AndersonMoore}.
We will assume that every  pair $(A_i,C_i)$ is observable, and every pair
$(A_i, W_i^{1/2})$  is controllable. Then, the steady-state value of $P_{i,k}^s$ for $k \rightarrow \infty$ exists for each sensor, and will be denoted by $\overline{P}_i$. For convenience of presentation, we will assume that the local Kalman filters at the sensors have reached steady state\footnote{Convergence to steady state in general occurs at an exponential rate \cite{AndersonMoore}.}, so that $P_{i,k}^s = \overline{P}_i, \forall i=1,\dots,N, \forall k$. 

\subsection{Scheduling and channel model}
\label{sec:sched-chann-model}

The sensors wish to transmit their local state estimates
$\hat{x}_{i,k}^s$ to a central gateway, which aims to estimate all of
the $N$ processes $\{x_{i,k}\}, i=1,\dots,N$. Sensor transmissions are over a
shared wireless network with $M$ channels. In typical applications, $
M \ll N$ due to limited resources. Thus, (at most) only $M$ out of the $N$
sensors can transmit at any given time. At each time step $k$, a scheduler will  allocate each of the $M$ channels to one of the sensors. We assume that each channel is allocated to a different sensor, although the case where multiple channels are allocated to the same sensor (e.g. as in \cite{MesquitaHespanhaNair}) can also be handled using our techniques. Define decision variables $a_{m,k} \in \{1,\dots,N\}$ for $m = 1,\dots,M$  as 
\begin{align}
a_{m,k} \triangleq   i & \textnormal{ if sensor $i$ is scheduled to transmit on} \nonumber \\ & \textnormal{ channel $m$ at time $k$}. \label{a_mk_defn}   
\end{align}

Channel transmissions can experience packet drops. Define $\gamma_{m,k} \in \{0,1\}$ for $ m = 1,\dots,M$ such that
$$\gamma_{m,k} \triangleq  \left\{\begin{array}{cl} 1, & \textnormal{if transmission on channel $m$ at time $k$} \\ & \textnormal{is successfully received at gateway} \\ 0, & \textnormal{otherwise}.  \end{array}  \right. $$
Each  channel is modelled using the Gilbert-Elliott (or Markovian packet drop \cite{HuangDey}) model, with 
\begin{align*} 
& p_m \triangleq \mathbb{P} (\gamma_{m,k} = 0 | \gamma_{m,k-1} = 1), \\ & q_m \triangleq \mathbb{P} (\gamma_{m,k} = 1 | \gamma_{m,k-1} = 0), \quad m = 1,\dots,M,
\end{align*}
and with the channels being independent of each other. $p_m$ and $q_m$ are also known  respectively as the failure rate and recovery rate.  As mentioned in the Introduction, we will not assume knowledge of the channel parameters $p_m, q_m, m=1,\dots,M$ at the scheduler.  We note that our model-free approach can also be readily extended to handle more general finite state Markov channels \cite{SadeghiKennedy,QuevedoOstergaardAhlen}.

\subsection{Protocol assumptions}
\label{sec:protocol-assumptions}

Scheduling is assumed to be done  at the gateway, with the decisions
$a_{m,k}$ fed back to the sensors.\footnote{Scheduling can also be
  done inside the network (e.g., at a wireless access point) provided
  $\gamma_{m,k-1}$ are fed back to the network to allow $P_{i,k-1},
  i=1,\dots,N$ to be reconstructed. This makes no difference for the
  approach considered here.}
We assume
that this (downlink) transmission from gateway to sensor works
without errors. We justify this by using all $M$
stochastically independent channels to transmit this signalling
information, resulting in an exponentially reduced error
probability. 
Error performance can be further improved by coding across channels
(rather than just simple repetition coding) and time (since signalling
information is relatively small, time overhead can be invested) \cite{Proakis,Molisch}.

After these channel assignments have been received by the sensors,
they send their respective data (local state estimates)  to the gateway.
Once these (uplink) transmissions
are complete, we move to the next time period~$k+1$.

\subsection{Remote Estimation at Gateway}
At the gateway, state estimates and  estimation error covariances of each of the processes are computed similar to \cite{XuHespanha,ShiEpsteinMurray}, as follows:
\begin{equation}
\label{remote_estimator_eqns_multi_sensor}
\begin{split}
\hat{x}_{i,k} & = \left\{\begin{array}{cl} \hat{x}_{i,k}^s, & \textnormal{if $\exists m$ s.t. $a_{m,k} = i$ and $\gamma_{m,k} = 1$}   \\ A_i \hat{x}_{i,k-1}, & \textnormal{otherwise}  \end{array}  \right. \\
P_{i,k} & = \left\{\begin{array}{cl}  \overline{P}_i, &   \textnormal{if $\exists m$ s.t. $a_{m,k} = i$ and $\gamma_{m,k} = 1$}  \\ h_i(P_{i,k-1}),  & \textnormal{otherwise},   \end{array} \right. 
\end{split}
\end{equation} 
where  $h_i(.), i=1,\dots,N$, is defined as
\begin{equation}
\label{h_defn}
h_i(X) \triangleq A_i X A_i^T + W_i.
\end{equation}
As mentioned in the Introduction, the gateway is assumed to have knowledge of the parameters for each of the $N$ processes, which allows (\ref{remote_estimator_eqns_multi_sensor}) to be (causally) computed for each process. 


\section{Problem Description}
\label{optimal_scheduling_sec}
The gateway wishes to find a scheduling policy to minimize  the average sum of the trace
of the estimation error covariance matrices across all sensors and all times. We will formulate a Markov decision process (MDP) to solve the associated sequential decision making problem:
\begin{equation}
\label{sensor_scheduling_MDP}
\begin{split}
&\min_{\{(a_{1,k},\dots,a_{M,k})\}} \limsup_{T\rightarrow\infty} \frac{1}{T} \mathbb{E} \left[ \sum_{k=0}^{T-1} \sum_{i=1}^N \textnormal{tr} P_{i,k} \right].
\end{split}
\end{equation}


 We assume that the channel allocations at time $k$ can depend on 
\begin{equation}
\label{MDP_state_original} 
 (P_{1,k-1}, \dots, P_{N,k-1},\gamma_{1,k-1},\dots,\gamma_{M,k-1}),
 \end{equation}
 namely the estimation error covariances and channel transmission outcomes at the previous time step, which is information that is available to the gateway.  From (\ref{remote_estimator_eqns_multi_sensor}) we see that $P_{i,k}$ is always of the form $h^n_i(\overline{P}_i)$ for some $n \in \mathbb{N}$, where $h^n_i(.)$ denotes the $n$-fold composition of $h_i(.)$ given in (\ref{h_defn}), with $h_i^0(.)$ being the identity. Define the holding time of sensor $i$ at time $k$ as
\begin{align*}
\tau_{i,k} \triangleq \min\{\tau \geq 0: & \textnormal{ $\exists m$ s.t. $a_{m,k-\tau} = i$ and $\gamma_{m,k-\tau} = 1$}  \},
\end{align*}
 which represents the amount of time since the last successful transmission of sensor $i$ to the gateway. Then we can express $P_{i,k}$ as 
$$P_{i,k} = h_i^{\tau_{i,k}} (\overline{P}_i),$$
and therefore the channel allocations at time $k$ can, equivalently, depend on 
\begin{equation}
\label{MDP_state_new}
(\tau_{1,k-1}, \dots, \tau_{N,k-1},\gamma_{1,k-1},\dots,\gamma_{M,k-1}),
\end{equation}
which is of smaller dimension than (\ref{MDP_state_original}), as each $\tau_{i,k-1}$ is scalar while each $P_{i,k-1}$ is a matrix. 
 Below we will describe more formally problem (\ref{sensor_scheduling_MDP}) as an MDP. 
 
\subsection{Formulation as a Markov Decision Process}
\label{MDP_formulation_sec}
\emph{State space}: 
 From the discussion above, the vector (\ref{MDP_state_new})
 can  be regarded as the state\footnote{Note that the state of the MDP is different from the states $x_{i,k}$ of the processes. From now on we will mostly use the word ``state'' to refer to the state of an MDP.} of the MDP (\ref{sensor_scheduling_MDP}) at time $k$, and thus the state space is $\mathbb{N}^N \times \{0,1\}^M$ (where we include 0 in the natural numbers $\mathbb{N}$). 

\emph{Action space}:
Next, we have a finite action space 
$$\{(a_{1,k},\dots,a_{M,k}) |   a_{1,k},\dots,a_{M,k} \textnormal{ all distinct} \},$$
corresponding to the  $\frac{N!}{(N-M)!}$ different ways of allocating the $M$ channels to the $N$ sensors. 

\emph{Cost function}:
Finally, the single stage cost at time $k$ is 
\begin{equation}
\label{per_stage_cost}
J_k = \sum_{i=1}^N \textrm{tr} P_{i,k}. 
\end{equation}

\begin{remark}
As the channel parameters $p_m, q_m, m=1,\dots,M$ are assumed to be unknown, we do not include the transition probabilities in our formulation of the MDP, and indeed their knowledge is not required when solving the MDP using reinforcement learning methods.
\end{remark}

\subsection{Stability Condition}
\label{stability_sec}
We will derive a sufficient condition on when the optimal solution to the MDP (\ref{sensor_scheduling_MDP}) has  bounded average cost, expressed in terms of the process and channel parameters. Such a stability condition is important for reliable monitoring of all of the processes.  We first make the following assumption: 

\begin{assumption}
\label{stability_assumption}
Define $\rho_{\textnormal{max}} \triangleq \max_{i=1,\dots,N} \rho(A_i)$ and $q_{\textnormal{max}} \triangleq \max_{m=1,\dots,M} q_m$, where $\rho(A_i)$ denotes the spectral radius of $A_i$. We assume that 
\begin{equation}
\label{stability_condition}
 \rho_{\textnormal{max}}^2 (1-q_{\textnormal{max}}) < 1.
 \end{equation}
\end{assumption}
\begin{theorem}\label{thm_stability}
Under Assumption \ref{stability_assumption}, the optimal solution to the MDP (\ref{sensor_scheduling_MDP}) has  bounded average cost.
\end{theorem}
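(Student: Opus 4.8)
Because the average cost in (\ref{sensor_scheduling_MDP}) is nonnegative, it suffices to exhibit \emph{one} admissible scheduling policy $\pi$ with finite average cost; the optimal value then lies in $[0,\infty)$. The policy I would use is an \emph{exhaustive-retransmission polling} scheme on the most reliable channel: fix $m^{\star}$ with $q_{m^{\star}}=q_{\textnormal{max}}$, cycle through the sensors $1,2,\dots,N,1,2,\dots$ on channel $m^{\star}$, keeping each sensor on $m^{\star}$ until one of its transmissions is successfully received and only then advancing the pointer (so the pointer evolves as a deterministic function of $\{\gamma_{m^{\star},\ell}\}$ alone), and assigning the remaining $M-1$ channels at each step to any $M-1$ distinct sensors other than the current pointer. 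After an initial transient this coincides with the stationary rule ``serve the sensor with the largest holding time on $m^{\star}$, with a fixed tie-break, and the next-largest on the other channels'', so $\pi$ may be regarded as stationary. I would first dispose of the trivial case $\rho_{\textnormal{max}}<1$, in which $\textnormal{tr}\,h_i^{\tau}(\overline P_i)$ is bounded uniformly in $\tau$ for every $i$ and the cost is finite under \emph{any} policy; so assume $\rho_{\textnormal{max}}\ge1$ from here on.

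The first quantitative step is to bound the instantaneous cost by the holding time. Unrolling $h_i^{\tau}(X)=A_i^{\tau}X(A_i^{\tau})^{T}+\sum_{\ell=0}^{\tau-1}A_i^{\ell}W_i(A_i^{\ell})^{T}$, using $\textnormal{tr}(AXA^{T})\le\|A\|^{2}\,\textnormal{tr}\,X$ for $X\ge0$, and the bound $\|A_i^{\ell}\|\le K_i(\rho(A_i)+\delta)^{\ell}$, I would show $\textnormal{tr}\,h_i^{\tau}(\overline P_i)\le C_i\lambda^{\tau}$ for all $\tau\ge0$, where $\lambda\triangleq\rho_{\textnormal{max}}^{2}+\epsilon$ and $\epsilon>0$ is small enough that $\lambda(1-q_{\textnormal{max}})<1$; such an $\epsilon$ exists precisely because the inequality in Assumption~\ref{stability_assumption} is strict, and choosing $\delta$ small makes the Jordan-block polynomial factors disappear into $C_i$. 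Combining this with $P_{i,k}=h_i^{\tau_{i,k}}(\overline P_i)$, with $\tau_{i,k}\le\hat\tau_{i,k}$ (the holding time computed from channel $m^{\star}$ alone, legitimate since $m^{\star}$-receptions form a subset of all receptions), and with $\lambda>1$, gives $\textnormal{tr}\,P_{i,k}\le C_i\lambda^{\hat\tau_{i,k}}$, so it remains to bound $\sup_k\mathbb{E}^{\pi}[\lambda^{\hat\tau_{i,k}}]$.

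For this I would argue by renewal theory on channel $m^{\star}$. Under $\pi$ the successful receptions on $m^{\star}$ are exactly the ``good'' instants of its Gilbert--Elliott chain, and by the (strong) Markov property the successive inter-reception gaps $D_1,D_2,\dots$ are i.i.d.\ for $j\ge2$, with $\mathbb{E}[\lambda^{D_j}]=(1-p_{m^{\star}})\lambda+p_{m^{\star}}\lambda\cdot\frac{\lambda q_{\textnormal{max}}}{1-\lambda(1-q_{\textnormal{max}})}=:\mu$, which is finite \emph{precisely} when $\lambda(1-q_{\textnormal{max}})<1$. Sensor $i$ is served once per $N$ consecutive such renewals, so the inter-service times of sensor $i$ are i.i.d.\ with moment generating function equal to $\mu^{N}$ at $\lambda$; a standard backward-recurrence-time computation for this (delayed) renewal process then yields $\limsup_k\mathbb{E}^{\pi}[\lambda^{\hat\tau_{i,k}}]<\infty$, and since $\hat\tau_{i,k}\le k$ makes each individual expectation finite, $\sup_k\mathbb{E}^{\pi}[\lambda^{\hat\tau_{i,k}}]=:\Gamma_i<\infty$. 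Therefore $\sup_k\mathbb{E}^{\pi}\big[\sum_{i=1}^{N}\textnormal{tr}\,P_{i,k}\big]\le\sum_{i=1}^{N}C_i\Gamma_i<\infty$, which bounds the $\limsup$ time-average in (\ref{sensor_scheduling_MDP}) under $\pi$, and the theorem follows.

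The main obstacle is the combination of the policy choice with the renewal computation: realizing that \emph{exhaustive retransmission on the channel of maximal recovery rate} is the reference policy for which the stability threshold is \emph{exactly} $\rho_{\textnormal{max}}^{2}(1-q_{\textnormal{max}})<1$, and verifying this through the gap moment generating function above. The remaining ingredients — the trivial stable case, the inequality $\tau_{i,k}\le\hat\tau_{i,k}$, the absorption of the Jordan-block polynomials into $C_i$, and the observation that successful transmissions on the other $M-1$ channels only decrease the covariances — are routine.
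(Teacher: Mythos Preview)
Your proposal is correct and shares the paper's high-level strategy --- reduce to the single channel $m^{\star}=\arg\max_m q_m$, exhibit a suboptimal scheduling policy on that channel with bounded average cost, and note that the full optimal cost can only be smaller --- but the \emph{analysis} of the suboptimal policy is genuinely different. The paper invokes the framework of Mesquita--Hespanha--Nair: it uses the condition $\rho_{\textnormal{max}}^{2}P_L^{1/L}<1$, where $P_L$ is the probability of fewer than $N$ successes in $L$ steps, and the main work is a combinatorial bound on $P_L$ tailored to Markovian drops (at most $n$ of the $L-n$ failures can follow a success), followed by $\lim_{L\to\infty}P_L^{1/L}\le 1-q_{m^{\star}}$. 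You instead bound $\textnormal{tr}\,P_{i,k}\le C_i\lambda^{\hat\tau_{i,k}}$ and control $\sup_k\mathbb{E}[\lambda^{\hat\tau_{i,k}}]$ directly via the moment generating function of the inter-reception gaps on the Gilbert--Elliott channel and a backward-recurrence-time argument for the induced renewal process. The two policies actually coincide (as you observe, exhaustive retransmission polling is, after a transient, exactly the largest-holding-time rule), so the distinction is purely in the bookkeeping: the paper's route is shorter because it cites an existing stability lemma and only needs to adapt the tail bound to Markov channels, whereas your route is more self-contained and makes the role of the threshold $\rho_{\textnormal{max}}^{2}(1-q_{\textnormal{max}})<1$ transparent as the radius of convergence of the gap m.g.f.
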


\begin{proof}
See the appendix.
\end{proof}

\begin{remark}
For the case of a single process and a single Gilbert-Elliott channel (with transition parameters $p$ and $q$),  when local state estimates are transmitted, a necessary and sufficient condition for bounded expected estimation error covariance is that $q$ satisfies \cite{GuptaHassibiMurray}:
\begin{equation}
\label{single_sensor_stability_condition}
\rho(A)^2 (1-q) < 1.
\end{equation}
The condition (\ref{stability_condition}) can be regarded as a generalization of  (\ref{single_sensor_stability_condition}) to multiple processes and multiple channels, and intuitively says that the overall system has bounded cost provided the best channel (in terms of having the largest recovery rate $q_m$) can keep the expected estimation error covariance of the most unstable process (i.e., having the largest spectral radius)  bounded. 
\end{remark}

\subsection{Computational Issues}
Considering first the case where the channel parameters $p_m, q_m, m=1,\dots,M$ are known, numerical solution of (\ref{sensor_scheduling_MDP}) using dynamic programming techniques (e.g. using policy iteration or relative value iteration) is in principle possible, after truncating the countable state space $\mathbb{N}^N \times \{0,1\}^M$   to a finite state space. However in practice, even for relatively small $N$ and $M$, the sizes of both the state and action spaces can still be considerable, making exact numerical solution infeasible. For the case $M=1$ without packet drops (and relatively small $N$ in numerical computation), a similar average cost problem has been previously studied \cite{HanWuZhangShi}. For  $M > 1$  and additionally also considering packet transmission length constraints, some structural results were derived in \cite{WuRenDeyShi}, however numerical results were only provided for the $M=1$ case. 

If the channel parameters $p_m, q_m, m=1,\dots,M$, are unknown (and hence the MDP transition probabilities are also unknown), as is assumed in the current work, then standard dynamic programming approaches for solving MDPs cannot be used. 

In order to overcome the above mentioned problems of large state space and unknown channel parameters, we will use recently developed reinforcement learning ($Q$-learning) methods utilizing deep neural networks for function approximation \cite{Mnih_Atari,Mnih_nature}, which will be described in the next section.


\section{Sensor Scheduling Using Deep Reinforcement Learning}
\label{deep_RL_sec}
Consider  the discounted cost problem 
\begin{equation}
\label{sensor_scheduling_MDP_discounted_cost}
\begin{split}
& \min_{\{(a_{1,k},\dots,a_{M,k})\}} \limsup_{T\rightarrow\infty} \mathbb{E} \left[ \sum_{k=0}^{T-1} \sum_{i=1}^N  \delta^k \textnormal{tr} P_{i,k} \right] 
\end{split}
\end{equation}
where $\delta < 1 $ is a discount factor. 
In this paper we will approximate the solution to problem (\ref{sensor_scheduling_MDP}) by solving (\ref{sensor_scheduling_MDP_discounted_cost}) using reinforcement learning techniques, with a discount factor $\delta$ close to 1 \cite{HernandezLermaLasserre}. While $Q$-learning type algorithms for average reward maximization problems exist \cite{Bertsekas_DP2,AbounadiBertsekasBorkar}, most reinforcement learning algorithms assume a discounted setting, in particular the  deep reinforcement learning techniques of \cite{Mnih_Atari,Mnih_nature}.  A more formal justification for solving the discounted cost problem  will be given in Section~\ref{discount_avg_relation_sec}.

\subsection{Solving the discounted cost problem using deep reinforcement learning}
Let us rewrite (\ref{sensor_scheduling_MDP_discounted_cost}) as the equivalent discounted reward maximization problem:
\begin{equation}
\label{sensor_scheduling_MDP_discounted_reward}
\begin{split}
& \max_{\{(a_{1,k},\dots,a_{M,k})\}} \liminf_{T\rightarrow\infty} \mathbb{E} \left[ \sum_{k=0}^{T-1} \sum_{i=1}^N  - \delta^k \textnormal{tr} P_{i,k} \right]. 
\end{split}
\end{equation}

The $Q$-factor or action-value function
  $Q(s,a)$ represents the expected future reward associated with taking action $a$ when at state  $s$ \cite{Bertsekas_DP2,SuttonBarto}. 
The $Q$-factor version of the Bellman equation  for problem (\ref{sensor_scheduling_MDP_discounted_reward}) is:
$$Q^*(s,a) = \mathbb{E} \left[ r + \delta \max_{a'} Q^*(s',a') | s,a \right],$$
where $s'$ represents the value of the next state given the current state $s$ and action $a$, and $Q^*(.,.)$ are the optimal $Q$-factors. 
If we know $Q^*(.,.)$, then  we can find a corresponding optimal stationary policy, with action $a^*(s)$ for each state $s$ as follows:
$$a^*(s) = \textrm{argmax}_{a} Q^*(s,a). $$ 

The well-known $Q$-learning algorithm  will, in principle, converge to  the optimal $Q$-factors, but in practice the convergence is rather slow and requires both the state and action spaces to be small in order for the method to be feasible. 
For large MDPs one can approximate $Q^*(s,a)$ by a  function $Q(s,a;\theta)$ parameterized by a set of weights $\theta$ \cite{SuttonBarto}, and then learning these weights. Deep reinforcement learning refers to the case where the function approximation $Q(s,a;\theta)$ uses a (deep) neural network, which has been crucial in recent key breakthroughs in artificial intelligence such as in the playing of Go \cite{Silver_alphago}. The  deep $Q$-learning techniques introduced in \cite{Mnih_Atari,Mnih_nature} also included a number of important innovations aimed at stabilizing the learning algorithm, in particular 1) the notion of experience replay\footnote{In experience replay we store the agent's experiences at each time-step, pooled over many episodes, into a replay memory. During the minibatch updates, random samples from the replay memory are drawn. Such a technique can reduce correlations in the observation data.}  (see step 9 of Algorithm \ref{alg:deepQscheduling} below), and 2) fixing the target $Q$-network at regular intervals\footnote{This technique can reduce correlations between the $Q$-factors and the target.} (see step~12 of Algorithm \ref{alg:deepQscheduling}). 
Based on these ideas, our approach to solving problem (\ref{sensor_scheduling_MDP_discounted_reward}) is given as Algorithm \ref{alg:deepQscheduling} below.

In Algorithm\,\ref{alg:deepQscheduling}, 
$$a_t = (a_{1,t},\dots,a_{M,t}),$$
c.f. (\ref{a_mk_defn}), 
corresponds to the allocation of the $M$ channels  at time $t$, and the single stage reward is given by
$$r_t = \sum_{i=1}^N - \textrm{tr} P_{i,t}.$$  The state $s_t$ could be chosen as 
$$s_t = (\tau_{1,t-1},\dots,\tau_{N,t-1},\gamma_{1,t-1},\dots,\gamma_{M,t-1})$$ as in Section \ref{MDP_formulation_sec}, however for the simulations in Section~\ref{numerical_sec} we  further augment the state to 
\begin{align}
\label{augmented_state}
s_t = \big(&\tau_{1,t-1},\dots,\tau_{N,t-1},\textnormal{tr} (h_1(P_{1,t-1})), \dots,  \nonumber \\ & \quad \textnormal{tr} ( h_N (P_{N,t-1})),\gamma_{1,t-1},\dots,\gamma_{M,t-1}\big),
\end{align}
 where $\textnormal{tr} (h_i(P_{i,t-1}))$ is directly related to the reward function at time $t$ when we don't receive transmission from sensor $i$, 
which we have found in some cases gives faster convergence for the algorithm. For details of the hyper-parameters for Algorithm \ref{alg:deepQscheduling} used in this paper, see Section \ref{numerical_sec}.
We note that Algorithm\,\ref{alg:deepQscheduling} can be run online, and is model-free in that it does not need knowledge of the channel parameters $p_m, q_m, m=1,\dots,M$. 

\begin{algorithm}[t]
\caption{Deep $Q$-network for wireless sensor scheduling}
\label{alg:deepQscheduling}
\begin{algorithmic}[1]
\State Initialize replay memory $\mathcal{D}$ to capacity $K$
\State Initialize network $Q$ with random weights $\theta_0$
\State Initialize target network $\hat{Q}$ with weights $\theta^- = \theta_0$
	\State Initialize $s_0 $
	\For{$t=0,1,\dots,T$}
		\State With probability $\varepsilon$ select a random action $a_t$, otherwise select $a_t = \textnormal{argmax}_a Q(s_t,a;\theta_t)$
		\State Execute $a_t$, and observe $r_t$ and $s_{t+1}$
		\State Store $(s_t,a_t,r_t,s_{t+1})$ in $\mathcal{D}$
		\State Sample random mini-batch of transitions $(s_j,a_j,r_j,s_{j+1})$ from $\mathcal{D}$
		\State Set $z_j = r_j + \delta \max_{a'} \hat{Q}(s_{j+1},a';\theta^-)$ for each sample in mini-batch
		\State Perform a mini-batch gradient descent step on $(z_j - Q(s_j,a_j; \theta_t))^2$  to obtain $\theta_{t+1}$
		\State Every $c$ steps set $\theta^- = \theta_t$
	\EndFor
\end{algorithmic}
\end{algorithm}

\subsection{Relationship to average cost problem}
\label{discount_avg_relation_sec}
As stated in Section~\ref{optimal_scheduling_sec}, the aim of the scheduler is to find a scheduling policy that minimizes the average estimation error covariances, i.e., solves an associated average cost problem. If the communication channels satisfy Assumption \ref{stability_assumption}, then it follows from Theorem~\ref{thm_stability} that there exists a  scheduling policy that ensures that the cost is bounded. 
In this subsection, we show that the policy found by solving the associated discounted cost problem  is an $\epsilon$-optimal policy for the average cost problem.\footnote{Note that $\epsilon$ here is different from the exploration parameter $\varepsilon$ of Algorithm 1.} Further, $\epsilon$ can be made arbitrarily small by controlling the discount factor, $\delta$, of the associated MDP. 

Recall that $J_k$ given by (\ref{per_stage_cost}) is  the single stage cost associated with  problem (\ref{sensor_scheduling_MDP}). Before proceeding, we state Abel's theorem \cite{HernandezLermaLasserre} for our setting:
\begin{theorem}[Abel]
 Let $\{J_k\}_{k \ge 0}$ be a sequence of positive real numbers. Then
  \begin{align*}
&  \liminf \limits_{T \to \infty} \frac{1}{T} \sum \limits_{k=0}^{T-1} J_k  \le \liminf \limits_{\delta \uparrow 1} (1 - \delta) \sum \limits_{k=0}^\infty \delta ^k J_k \\ & \quad \le \limsup \limits_{\delta \uparrow 1} (1 - \delta) \sum \limits_{k=0}^\infty \delta ^k J_k \le \limsup \limits_{T \to \infty} \frac{1}{T} \sum \limits_{k=0}^{T-1} J_k.
 \end{align*}
\end{theorem}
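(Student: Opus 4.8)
The plan is to reduce the two outer inequalities to an elementary fact about convex combinations; the middle inequality, $\liminf_{\delta\uparrow1}(1-\delta)\sum_{k}\delta^kJ_k\le\limsup_{\delta\uparrow1}(1-\delta)\sum_{k}\delta^kJ_k$, is automatic. Write $S_T:=\sum_{k=0}^{T-1}J_k$ (so $S_0=0$) and $c_T:=S_T/T$ for the partial sums and their running averages, and note $c_T\ge0$ because $J_k\ge0$. The first step is a summation by parts: since $J_k=S_{k+1}-S_k$,
\begin{equation*}
(1-\delta)\sum_{k=0}^\infty\delta^kJ_k
=(1-\delta)^2\sum_{k=0}^\infty\delta^kS_{k+1}
=\sum_{k=0}^\infty w_k(\delta)\,c_{k+1},
\end{equation*}
where $w_k(\delta):=(1-\delta)^2(k+1)\delta^k\ge0$ and $\sum_{k\ge0}w_k(\delta)=(1-\delta)^2\sum_{k\ge0}(k+1)\delta^k=1$. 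Thus the Abel average is a genuine convex combination of the numbers $c_1,c_2,\dots$.

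The second step is that these weights escape to infinity: for each fixed $K$, $\sum_{k=0}^{K}w_k(\delta)\le(1-\delta)^2\sum_{k=0}^{K}(k+1)\to0$ as $\delta\uparrow1$. Given this I would close the upper inequality $\limsup_{\delta\uparrow1}(1-\delta)\sum_k\delta^kJ_k\le L:=\limsup_{T\to\infty}c_T$ by the standard truncation argument: assuming $L<\infty$ (else there is nothing to prove), fix $\epsilon>0$ and $T_0$ with $c_T\le L+\epsilon$ for all $T\ge T_0$, split the convex combination at index $T_0-1$, bound the infinite tail by $(L+\epsilon)$ and the finite head by $\bigl(\max_{1\le j<T_0}c_j\bigr)\sum_{k<T_0-1}w_k(\delta)\to0$, and let $\epsilon\downarrow0$. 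The lower inequality $\liminf_{T\to\infty}c_T\le\liminf_{\delta\uparrow1}(1-\delta)\sum_k\delta^kJ_k$ is symmetric: for any $B<\liminf_{T}c_T$ choose $T_0$ with $c_T\ge B$ for $T\ge T_0$, \emph{discard} the nonnegative head, and use $\sum_{k\ge T_0-1}w_k(\delta)\to1$ to conclude that the Abel liminf is $\ge B$ for every such $B$.

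The only points that need a little care are the boundary cases — $L=+\infty$, $\liminf_Tc_T\in\{0,+\infty\}$, and the possibility that $\sum_k\delta^kJ_k=+\infty$, which can occur only when $\limsup_Tc_T=+\infty$ (since $L<\infty$ forces $J_k\le S_{k+1}\le(L+1)(k+1)$ eventually, whence the Abel series converges), and then the inequality in question is vacuous — together with getting the summation-by-parts identity and the normalization $\sum_kw_k(\delta)=1$ exactly right. I do not anticipate a genuine obstacle: this is the classical Abelian half of an Abelian/Tauberian pair, and positivity of $\{J_k\}$ enters only through $c_T\ge0$, which is what lets one drop the head term with the correct sign in the liminf direction.
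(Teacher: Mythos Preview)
Your argument is correct: the summation-by-parts identity $(1-\delta)\sum_k\delta^kJ_k=\sum_kw_k(\delta)c_{k+1}$ with $w_k(\delta)=(1-\delta)^2(k+1)\delta^k$ is exact, the weights form a probability distribution that shifts mass to infinity as $\delta\uparrow1$, and the truncation arguments for the two outer inequalities go through as you describe. The boundary cases are handled appropriately.

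However, there is nothing to compare against: the paper does not prove this statement. It is quoted as a classical result (Abel's theorem, with a reference to Hern\'andez-Lerma and Lasserre) and invoked as a black box in Section~\ref{discount_avg_relation_sec} to relate the discounted and average-cost formulations. So you have supplied a self-contained proof where the paper simply cites one. Your approach --- rewriting the Abel average as a convex combination of Ces\`aro means with weights escaping to infinity --- is the standard elementary route and would serve as a fine replacement for the citation if one wanted the paper to be self-contained on this point.
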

From Theorem~\ref{thm_stability} it follows that there exist (stabilizing) scheduling policies with finite associated average costs. It now follows from Abel's theorem that:
\begin{equation}
 \label{eq_avg_vanish_equal}
  \lim \limits_{T \to \infty} \frac{1}{T} \sum \limits_{k=0}^{T-1} J_k = \lim \limits_{\delta \uparrow 1} (1 - \delta) \sum \limits_{k=0}^\infty \delta ^k J_k < \infty.
\end{equation}
Furthermore, given $\epsilon > 0$, there exists an $\delta(\epsilon) \approx 1$, dependent on $\epsilon$, such that:
\begin{align*}
 \lim \limits_{\delta \uparrow 1} (1 - \delta) \sum \limits_{k=0}^\infty \delta ^k J_k &\le (1 - \delta(\epsilon)) \sum \limits_{k=0}^\infty \delta(\epsilon) ^k J_k + \epsilon,\\
 \Rightarrow \lim \limits_{T \to \infty} \frac{1}{T} \sum \limits_{k=0}^{T-1} J_k &\le (1 - \delta(\epsilon)) \sum \limits_{k=0}^\infty \delta(\epsilon) ^k J_k + \epsilon. 
\end{align*}
In addition to $\epsilon$, $\delta(\epsilon)$ also depends on the actual realizations of the single stage cost sequences $\{J_k\}_{k \ge 0}$. If one wishes to find an $\epsilon$-optimal policy, then one can choose a discount factor $\delta(\epsilon)$, provided the ``orders'' of these single stage costs are known. In our problem, the single stage costs are unbounded. However, it is clear that the discount factor $\delta \uparrow
1$  as $\epsilon \downarrow 0$. Hence, in our numerical experiments, we choose a discount factor  close to $1$. 



\section{Numerical Studies}
\label{numerical_sec}
We consider an example with $N = 6$ sensors and $M = 3 $ channels. Each process has state dimension 2 (i.e. $n_{x_i} = 2, i = 1,\dots,N$) and scalar measurements ($n_{y_i} = 1, i = 1,\dots,N$). The process parameters $A_i, C_i, W_i, V_i, i=1,\dots,N$  and channel transition probabilities $p_m, q_m, m=1,\dots,M$ are randomly generated. The eigenvalues of $A_i$ are drawn uniformly from the range $(0, 1.3)$. The entries of $C_i$ are drawn uniformly from the range $(0,1)$, and  $W_i$ and $V_i$ are generated by random orthogonal transformations of a diagonal matrix with random diagonal entries  drawn uniformly from the range $(0.2,1.0)$. The channel transition probabilities $p_m$ and $q_m$ are uniformly generated from the range $(0,1)$. 

The following hyper-parameters for Algorithm\,\ref{alg:deepQscheduling} are used in our simulations. 
In the deep-$Q$ network, the augmented state (\ref{augmented_state}) of dimension $2N+M$ is fed in as input, i.e. there is an input layer with $2N+M=15$ nodes.  We use two hidden layers, with each hidden
layer having 1024 nodes, and a fully connected layer with outputs for
each of the $N!/(N-M)!  = 120$ actions. The discount factor is set to
$\delta = 0.95$. The experience replay memory has size $K =
20000$. The exploration parameter $\varepsilon$ in step 6 of Algorithm \ref{alg:deepQscheduling} is attenuated from 1 to
0.01 at the rate of 0.999, i.e. $\varepsilon \leftarrow \max(0.999\varepsilon,0.01)$ after every iteration.
In the neural network training (step 11 of Algorithm~\ref{alg:deepQscheduling}) the ADAM optimizer \cite{KingmaBa} is used with an initial learning rate of $e^{-4}$ and a learning rate decay of 0.001.\footnote{If $\alpha_t$ represents the learning rate at iteration $t$,  $\alpha_0$ the initial learning rate, and $d$ the decay, then $\alpha_t = \frac{\alpha_0}{1+ d t}$.} The size of each mini-batch is 32. The target $Q$-network is updated once every $c = 100$ time steps. 

Algorithm\,\ref{alg:deepQscheduling} is run to train our deep $Q$-network. In order to get a better idea of the training quality over time, we will reset the process after each $T = 500$, which we will refer to as an episode \cite{SuttonBarto}. Running on a standard Intel Core i7 4790 with 8 Gb RAM (without  GPU), each episode of training when using the above hyper-parameters took around 30 seconds to complete. The empirical average cost
$$ \frac{1}{T} \sum_{k=0}^{T-1} \sum_{i=1}^N  \textnormal{tr} P_{i,k}$$ over different episodes for one randomly generated set of parameters is plotted in Fig. \ref{learning_plot1}. 
\begin{figure}[t!]
\centering 
\includegraphics[scale=0.5]{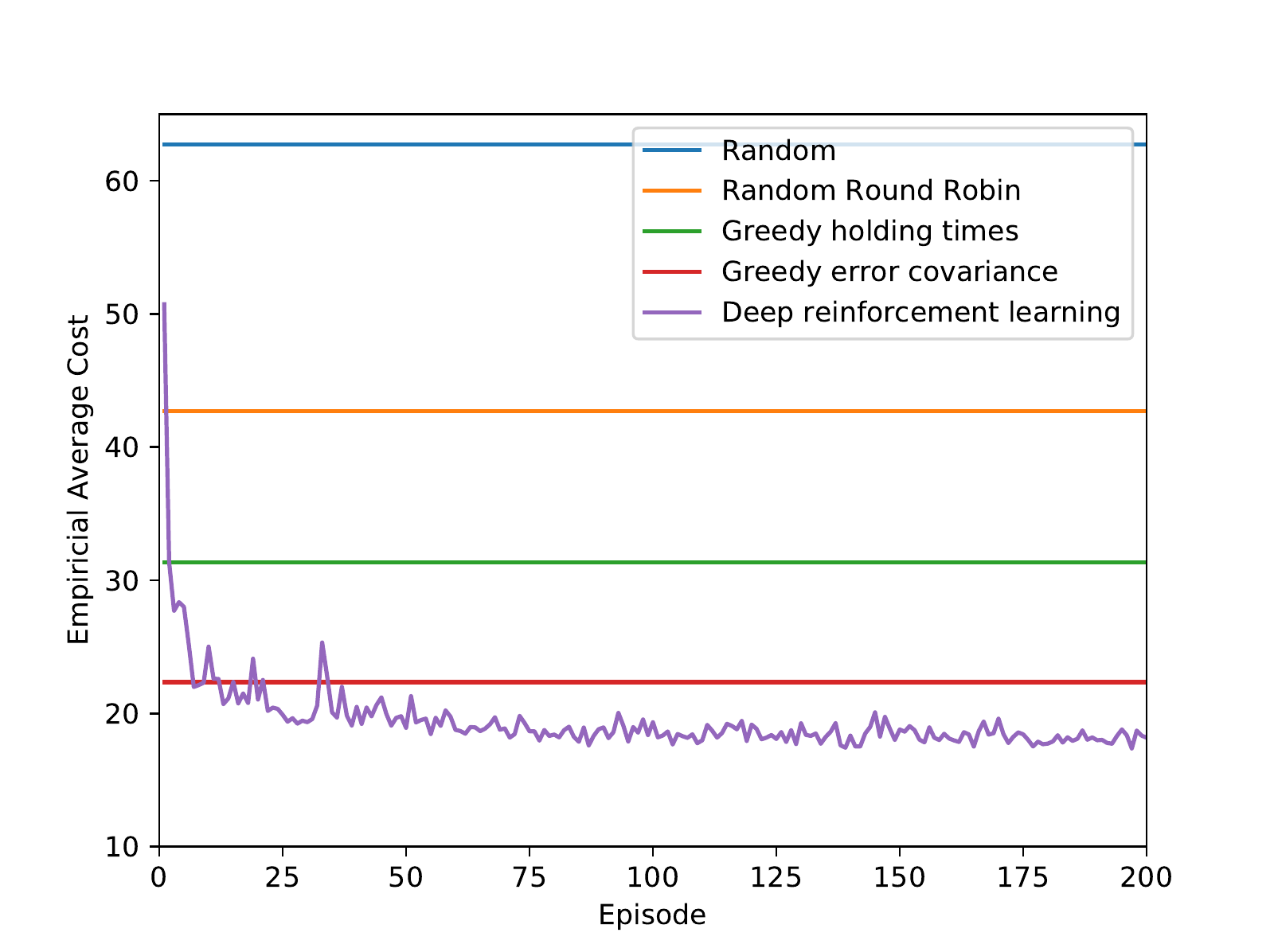} 
\caption{Empirical average cost over different training episodes. The long term average performances of other suboptimal algorithms are also shown for comparison.}
\label{learning_plot1}
\end{figure}  

 We stopped  training after 200 episodes. We then use the trained $Q(.,.; \theta)$ to generate a policy according to
$$a^*(s) = \textrm{argmax}_{a} Q(s,a; \theta). $$
Using the trained policy, simulating the process over 50000 time steps then gives an empirical average cost of around 17.8. 
We compare this performance with the following  policies: 
\begin{enumerate}
\item A random policy that at each time $k$ randomly allocates $M$ out  of the $N$ sensors to the $M$ channels.
\item A round robin policy where $M$ successive sensors (modulo $N$) are randomly allocated to  the $M$ channels at every time instance.\footnote{Round robin schedules are similar to periodic schedules commonly studied in the control literature when there are no packet drops \cite{MoGaroneSinopoli,ZhaoZhangHu}.} 
\item A greedy policy on the holding times, where at each time $k$ we allocate the $M$ sensors with the largest $\tau_{i,k-1}$ (in the case of ties we take the sensors with smallest indices) randomly to the $M$ channels.
\item  A greedy policy on the error covariance, where at each time $k$ we allocate the $M$ sensors with the largest $\textrm{tr} P_{i,k-1}$ randomly to the $M$ channels. 
\end{enumerate}
Simulation over 50000 time steps gives an empirical average cost of around 62.7 for the random policy, 42.7 for the round robin policy, 31.3 for  the greedy policy on holding times, and 22.4 for the greedy policy on error covariances. The performances of these policies are also shown in Fig. \ref{learning_plot1} for comparison. We see that our deep reinforcement learning approach consistently outperforms these policies after around 40-50 episodes of training.

\begin{table*}[t!]
\caption{Empirical average costs for 10 randomly generated sets of parameters}
\centering
\begin{tabular}{rSSSSSS} 
\toprule 
\multicolumn{1}{l}{Param. Set} &  \multicolumn{1}{l}{Random} 
&\multicolumn{1}{l}{Round Robin} 
& \multicolumn{1}{l}{Greedy holding time} 
& \multicolumn{1}{l}{Greedy error covariance} 
& \multicolumn{1}{l}{Deep RL}
& \multicolumn{1}{l}{No replay,}
\\
\multicolumn{1}{l}{} &  \multicolumn{1}{l}{} 
&\multicolumn{1}{l}{} 
& \multicolumn{1}{l}{} 
& \multicolumn{1}{l}{} 
& \multicolumn{1}{l}{}
& \multicolumn{1}{l}{no target $Q$}
  \\ \midrule 
1 & 29151 & 954 & 55.7 & 26.2   & 21.5 & 22.1 \\ 
2 & 1612 & 415 & 80.8 & 49.4  & 36.4  & 41.2\\ 
3 & 2358 & 722 & 80.4 & 51.7   & 32.8  & 44.3\\
4 & 136 & 82.7 & 47.4 & 39.9   & 34.3  & 36.7\\ 
5 & 102 & 42.8 & 17.1 & 13.5   & 10.4  & 10.6 \\ 
6 & 119 &  34.9 & 19.3 & 18.1   & 15.7  & 16.8\\ 
7 & 10097 & 2576 & 58.4 & 42.1   & 35.8  & 39.5\\ 
8 &  65630  & 12555 & 136 & 77.4   & 28.7 & 29.3 \\ 
9 &  37.2  & 30.7 & 25.9 & 23.2   & 21.8 & 22.5\\ 
10 &  29321  & 9049 & 99.4 & 64.6   & 36.7 & 37.7
  \\  \bottomrule
\end{tabular}
\label{comparison_table}
\end{table*}


In Table \ref{comparison_table} we report further comparisons between the random policy, round robin policy, greedy policies, and the performance using deep reinforcement learning, for 10 different randomly generated sets of parameters  $A_i, C_i, W_i, V_i, p_m, q_m, i=1,\dots,N, m=1,\dots,M$ (making sure that condition (\ref{stability_condition}) is satisfied), while keeping $N=6$ and $M=3$.   The same hyper-parameters for training the deep $Q$-network as in the above was used. We can see that the random policy and round robin policy generally do not perform well (although the performance of the round robin policy seems to be better than the purely random policy), and in fact appears to lead to instability in some of the scenarios. The greedy policy on the error covariances performs better than the greedy policy on the holding times, due to the use of more knowledge of the system parameters.  We also see that in each scenario  the approach using deep reinforcement learning performs significantly better than all the other considered policies.  The last column of Table \ref{comparison_table} gives the performance when the techniques from \cite{Mnih_Atari,Mnih_nature} of experience replay and fixing the target $Q$-network are not used. We see that without using these techniques, while in some cases the performance is similar, in other cases there is a significant performance loss.

\begin{remark}
Existing non-control aware scheduling strategies include random, round robin, or greedy strategies with respect to a given parameter, which are also used to, e.g., reduce waiting/holding times. However, in estimation and control applications such strategies do not perform as well as strategies which take into account the dynamics of the processes, as can be seen in Table~\ref{comparison_table}.  
\end{remark}

\section{Conclusion}
\label{sec:conclusions}

This paper has studied a sensor scheduling problem for allocating wireless channels to sensors, for the purposes of remote state estimation of multiple dynamical systems. With the aim of providing a method which can handle larger problems than previous work in the literature, we have proposed an approach based on modern deep reinforcement learning ideas. The resulting scheduling algorithm can be run online, and is model-free with respect to the wireless channel parameters. Numerical results have demonstrated that our approach consistently and significantly outperforms other suboptimal sensor scheduling policies. 
 Future work will include the study of model-based reinforcement learning techniques \cite{Pong_temporal_difference}, to possibly improve the speed of learning when additional knowledge about the channel parameters is available. 


\appendix
\section{Proof of Theorem \ref{thm_stability}}
\label{sec:thm_stability_proof}
In the case $\rho_{\textnormal{max}} < 1$, condition (\ref{stability_condition}) is always satisfied. Indeed, in this case each process is stable and so the MDP (\ref{sensor_scheduling_MDP}) has  bounded average cost even when there are no sensor transmissions. 

Thus we concentrate on the case $\rho_{\textnormal{max}} \geq 1$.
Let $$m^* \triangleq  \textnormal{arg}\!\!\!\!\max_{m=1,\dots,M} q_m.$$ First assume a single channel system where only channel $m^*$ is available.
Consider a suboptimal policy where at each time instant the sensor with the largest holding time is chosen to transmit, provided that this holding time is greater than some $L > 2N$ \cite{MesquitaHespanhaNair}.  Using an argument similar to the proof of the first part of Theorem~3 in \cite{MesquitaHespanhaNair}, we can show that  this policy has bounded average cost if
\begin{equation}
\label{Mesquita_condition}
\rho_{\textnormal{max}}^2 P_L^{1/L} < 1,
\end{equation}
where $P_L$ can be expressed as $$P_L = \sum_{n < N} \mathbb{P}(n \textnormal{ successful transmissions in } L \textnormal{ time steps}).$$

The rest of the argument in Theorem 3 of \cite{MesquitaHespanhaNair} assumes i.i.d. packet dropping channels. To extend the argument to Markovian packet drops as considered in the current work, we make the following observation: 
Given that there are $n$ successful transmissions, then there will be $L-n$ failed transmissions in these $L$ time steps. Of these $L-n$ failed transmissions, \emph{at most $n$ of them will have followed a successful transmission} (or equivalently at least $L-2n$ of them will have followed a failed transmission). From this observation, we have
\begin{equation}
\label{P_L_bound}
\begin{split}
P_L & = \sum_{n < N} \mathbb{P}(n \textnormal{ successful transmissions in } L \textnormal{ time steps}) \\
   & \leq \sum_{n < N} \binom{L}{n} (\max(q_{m^*},1-p_{m^*}))^n   \\ & \quad \quad \times (\max(p_{m^*},1-q_{m^*}))^n (1-q_{m^*})^{L-2n} \\
   & \leq (N-1) \binom{L}{N-1}  (1-q_{m^*})^{L-2n}.
\end{split}
\end{equation}
In the first inequality in (\ref{P_L_bound}), the term $(\max(q_{m^*},1-p_{m^*}))^n$ upper bounds the probability of having $n$ successful transmissions, while the term $(\max(p_{m^*},1-q_{m^*}))^n   (1-q_{m^*})^{L-2n}$ upper bounds the probability of having $L-n$ failed transmissions, with at least $L-2n$ also having the previous transmission fail. 
The second inequality in (\ref{P_L_bound}) holds as $\binom{L}{n} \leq \binom{L}{N-1}$ for all $n < N$ if $L > 2N$. 
Taking limits in (\ref{P_L_bound}) gives
\begin{align*}
\lim_{L\rightarrow \infty} P_L^{1/L} & \leq  \lim_{L\rightarrow \infty}(N\!-\!1)^{1/L} \binom{L}{N\!-\!1}^{1/L} \!\!\! (1\!-\!q_{m^*})^{(L-2n)/L} \\ & = 1-q_{m^*}.
\end{align*}
Then by Assumption \ref{stability_assumption}, the condition (\ref{Mesquita_condition}) can always be satisfied for $L$ sufficiently large, and so the suboptimal policy has bounded average cost.
Thus the MDP (\ref{sensor_scheduling_MDP}) with only the single channel $m^*$ has bounded optimal average cost. As utilizing additional channels does not increase the optimal average cost, the result follows. 

\bibliography{IEEEabrv,sensor_scheduling}
\bibliographystyle{IEEEtran} 


\end{document}